\def\BState{\State\hskip-\ALG@thistlm}
\DeclareMathOperator{\EL}{EL}
\DeclareMathOperator{\B}{M}
\DeclareMathOperator{\ff}{First-Fit}
\newtheorem{theorem}{Theorem}[section]
\newtheorem{defin}[theorem]{Definition}
\newtheorem{lemma}[theorem]{Lemma}
\definecolor{gg}{rgb}{0.0,0.5,0.0}
\definecolor{amber}{rgb}{1.0,0.75,0.0}
\definecolor{amy}{rgb}{0.6,0.4,0.8}
\author{Israel R. Curbelo}
\email{icurbelo@kean.edu}
\address{Department of Mathematical Sciences, Kean University, Union, NJ 07083}
\title{Online coloring of short interval graphs and two-count interval graphs}
\begin{document}

\begin{abstract}
We study the online coloring of $\sigma$-interval graphs, which are interval graphs with interval lengths in $[1,\sigma]$ and 2-count interval graphs, which are interval graphs that require at most two distinct interval lengths.

For $\sigma$-interval graphs, the Kierstead-Trotter algorithm has competitive ratio 3 and no online algorithm has competitive ratio better than 2. In this paper, we show that for every $\varepsilon>0$, there is a $\sigma>1$ such that there is no online algorithm for $\sigma$-interval coloring with competitive ratio less than $3-\varepsilon$. 

For 2-count interval graphs, we show that the greedy algorithm First-Fit has competitive ratio at most $4$, that there is no online algorithm with competitive ratio less than $2.5$ when the interval representation is unknown, and that there is no online algorithm with competitive ratio less than $2$ when the interval representation is known.     
\end{abstract}

\maketitle

\section{Introduction}

An online graph coloring algorithm $A$ receives a graph $G$ in the order of its vertices $v_1,\ldots,v_n$ and constructs an online coloring. This means that the color assigned to a vertex $v_i$ depends solely on the subgraph induced by the vertices $\{v_1,\ldots,v_i\}$ and the colors assigned to them. One of the simplest online coloring algorithms is the greedy algorithm $\ff$, which traverses the list of vertices given in the order they are presented, and assigns each vertex the smallest color not assigned to its neighbors that appear before it in the list of vertices. 

The performance of an online coloring algorithm $A$ on a graph $G$ is measured with respect to the chromatic number $\chi(G)$, that is, the number of colors used by an optimal offline algorithm. Let $\chi_A(G)$ denote the maximum number of colors needed by algorithm $A$ over all orderings of the vertices of $G$. An online coloring algorithm $A$ is \emph{$\rho$-competitive} if there exists a $\beta\in\mathbb{R}$ such that for every graph $G$, $\chi_A(G)\leq\rho\cdot\chi(G)+\beta$. The \emph{competitive ratio} of algorithm $A$ is $\inf\{\,\rho\mid\text{$A$ is $\rho$-competitive}\,\}$. In general, $\ff$ does not have a bounded competitive ratio, and there is no online algorithm with a constant competitive ratio. The best known algorithm \cite{hal-97,hal-sze-94} uses $O(\chi n/\log n)$ colors for $\chi$-colorable graphs of size $n$, and no online coloring algorithm can be $o(n/\log^2 n)$-competitive. For this reason, it is common to restrict the class of graphs. An important class of intersection graphs that has been extensively studied in this context, with applications in scheduling and resource allocation, is the class of interval graphs. 

A graph $G=(V,E)$ is an \emph{interval graph} if there is a function $I$ which assigns to each vertex $v\in V$ a closed interval $I(v)$ on the real line so that for all $u,v\in V$, $u$ and $v$ are adjacent if and only if $I(u)\cap I(v)\neq\emptyset$. We refer to $I$ as an \emph{interval representation} of $G$.  When every interval has unit length, we refer to $G$ as a \emph{unit-interval graph} and $I$ as a \emph{unit-interval representation} of $G$. For $\sigma\geq 1$, we define a $\sigma$-interval representation as an interval representation where every interval has length in $[1,\sigma]$. We say a graph is a $\sigma$-interval graph if it has a $\sigma$-interval representation. When $\sigma=1$, this is the class of unit-interval graphs, and when ``$\sigma=\infty$'', this is the class of interval graphs. The interval count of an interval graph $G$ is the smallest integer $k\geq1$ such that $G$ has an interval representation with intervals of at most $k$ distinct lengths. When $k=1$, this is the class of unit-interval graphs. In other words, we study intermediate cases of unit-interval graphs and interval graphs. Furthermore, for any $1<\sigma<\infty$ and for any $1<k<\infty$, the class of $\sigma$-interval graphs and the class of $k$-count interval graphs are not the same. It is easy to see that the star $K_{1,\lceil{\sigma\rceil}+2}$ is a 2-count interval graph but not a $\sigma$-interval graph. For an example of a graph that is a $(1+\varepsilon)$-interval graph for any $\varepsilon>0$ but not a 2-count interval graph, see Figure \ref{fig:2count}. Lastly, if $G$ is an interval graph, then $\chi(G)=\omega(G)$ where $\omega(G)$ denotes the clique number of the graph $G$.

\begin{figure}
    \begin{center}
    \begin{tikzpicture}[scale=0.7]
    \node (e1) at (-1,-0.5){1};
    \node (e) at (0,-0.5){2};
    \node (f) at (1,-0.5){3};
    \node (g) at (3,-0.5){$\lceil\sigma\rceil$};
    \node (g1) at (5,-0.5){$\lceil\sigma\rceil+1$};
    \node (g2) at (7,-0.5){$\lceil\sigma\rceil+2$};
    \node (d) at (2,0){$\ldots$};
    \node (K) at (2,-2){$K_{1,\lceil{\sigma\rceil}+2}$};
    \node (J) at (12,-2){$H$};
    \begin{scope}[every node/.style={circle,draw,fill=black,inner sep=2pt}]

    \node (E0) at (-1,0){};
    \node (E) at (0,0){};
    \node (F) at (1,0){};
    \node (G) at (3,0){};
    \node (G1) at (5,0){};
    \node (G2) at (7,0){};
    \node (H) at (2,1){};

    \node (A) at (12,1){};
    \node (B) at (10,0){};
    \node (B1) at (11,0){};
    \node (B2) at (12,-1){};
    \node (B3) at (13,0){};
    \node (B4) at (14,0){};
    \node (B5) at (15,0){};
    
    \end{scope}

    \begin{scope}[every edge/.style={draw, thick}]

    \path (H) edge (E0);
    \path [-] (H) edge (E);
    \path [-] (H) edge (F);
    \path [-] (H) edge (G);
    \path [-] (H) edge (G1);
    \path [-] (H) edge (G2);

    \path (A) edge (B); 
    \path (A) edge (B1);
    \path (A) edge (B2);
    \path (A) edge (B3);
    \path (A) edge (B4);
    \path (A) edge (B5);
    \path (B) edge (B1);
    \path (B1) edge (B2);
    \path (B2) edge (B3);
    \path (B2) edge (B4);
    \path (B4) edge (B5);
            
    \end{scope}
    \end{tikzpicture}
    \end{center}
    \caption{The star $K_{1,\lceil{\sigma\rceil}+2}$ is a 2-count interval graph but not a $\sigma$-interval graph. The graph $H$ is a $(1+\varepsilon)$-interval graph for any $\varepsilon>0$ but not a 2-count interval graph.}
    \label{fig:2count}
\end{figure}

\subsection{Previous Work}

In 1976, Woodall \cite{woo-74} studied the performance of $\ff$ on interval graphs. After extensive research on this problem \cite{kie-88,kie-qin-95,kie-tro-81,chr-slu-88,pem-ram-04,nar-sub-08,kie-smi-16}, the competitive ratio of $\ff$ has been shown to be between 5 and 8. In 1981, Kierstead and Trotter \cite{kie-tro-81} presented an algorithm with a competitive ratio of 3. Additionally, they presented a strategy for constructing an interval graph that forces any online coloring algorithm to use $3\omega-2$ colors on an interval graph of clique number at most $\omega$. 
\begin{theorem}[Kierstead and Trotter \cite{kie-tro-81}]
    There is an online algorithm that uses at most $3\omega-2$ colors to color any interval graph with clique number at most $\omega$. Additionally, there is no online algorithm that uses less than $3\omega-2$ colors on any interval graph with clique number $\omega$.
    \label{thm:kt}
\end{theorem}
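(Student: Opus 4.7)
I would prove the two halves of the theorem separately, starting with the upper bound.

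For the upper bound, I would exhibit the Kierstead--Trotter algorithm explicitly. The algorithm maintains, at each step, a partition of the intervals processed so far into two classes $A$ and $B$, each colored from a disjoint palette. When a new interval $I$ arrives, it is placed in $A$ whenever doing so preserves the property that the subgraph induced by $A$ has clique number at most $\omega$; otherwise $I$ is placed in $B$. First-Fit is then applied within each class independently, using colors $\{1,\ldots,2\omega-1\}$ for $A$ and colors $\{2\omega,\ldots,3\omega-2\}$ for $B$. The analysis reduces to two structural claims: (i) the placement rule forces the intervals in $A$ into a sufficiently ``proper'' configuration so that First-Fit uses at most $2\omega - 1$ colors within $A$, matching the classical bound on unit-like interval families; and (ii) every interval deferred to $B$ arrives with at least $\omega$ members of $A$ through some point of its interior, which forces the clique number of $B$ itself to be at most $\omega - 1$, so First-Fit uses at most $\omega - 1$ colors on $B$. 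Combining the two palettes gives the total bound of $3\omega - 2$.

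For the lower bound, I would describe a recursive adversary strategy and prove by induction on $\omega$ that any online algorithm can be forced to use at least $3\omega - 2$ colors on some interval graph of clique number at most $\omega$. The base case $\omega = 1$ is immediate. For the inductive step, the adversary first runs the $(\omega - 1)$-strategy on a bounded interval of the real line to build a subgraph $H_{\omega - 1}$ that already uses $3(\omega - 1) - 2$ distinct colors. Then, working to the right of $H_{\omega-1}$, the adversary presents intervals in three carefully staged batches: the first batch probes the colors the algorithm has committed to on the rightmost fragment of $H_{\omega - 1}$, the second batch introduces longer intervals that overlap targeted color classes, and the third batch completes the construction with intervals that chain the batches together while keeping the overall clique number at $\omega$. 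A case analysis on which color the algorithm picks at each stage shows that at least three fresh colors beyond those used in $H_{\omega-1}$ are unavoidable.

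The main obstacle I expect is the upper-bound analysis, specifically the proof of structural claim (i): showing that the greedy placement rule yields an $A$-subgraph on which First-Fit is guaranteed to stay within $2\omega - 1$ colors. This requires identifying a local property (something like ``no interval in $A$ is properly contained in another'' or a bound on the depth of strictly-nested intervals) that is preserved by the placement rule and that certifies the First-Fit bound. On the lower-bound side, the delicate point is coordinating the three-stage adversary so that (a) the clique number of the resulting graph grows by only one relative to $H_{\omega-1}$, and (b) every possible color choice by the online algorithm in any stage forces one of three new colors, with no stage allowed to reuse a color saved in another. Both ends of the proof come down to careful bookkeeping of which colors live where along the real line.
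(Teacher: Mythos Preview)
The paper does not give a self-contained proof of this cited theorem; it only sketches the algorithm $\KT(\omega)$ and the adversary $\KT\langle\omega\rangle$ in the Preliminaries section. Comparing your plan to those sketches, both halves diverge from the actual construction, and the upper-bound half has a genuine gap.

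\textbf{Upper bound.} Your placement rule keeps an interval in $A$ whenever $A$ retains clique number at most $\omega$. But the input already has clique number at most $\omega$, so \emph{every} interval is placed in $A$, $B$ is empty, and you are left needing First-Fit to use at most $2\omega-1$ colors on an arbitrary interval graph of clique number $\omega$ --- which is false. The correct threshold (and the one the paper states) is $\omega-1$: the set $G$ is kept at clique number at most $\omega-1$ and is colored \emph{recursively} by $\KT(\omega-1)$, not by First-Fit. The structural work then lands on the overflow set $G'$, not on $G$: one shows that First-Fit uses only three colors on $G'$, because each interval sent to $G'$ witnesses a depth-$(\omega-1)$ point in $G$, which severely restricts how $G'$-intervals can be mutually disjoint. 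So both your threshold and your claim (i) point at the wrong set; the ``proper/unit-like'' behavior you are looking for is a feature of $G'$, and the bound it buys is $3$, not $2\omega-1$. The recursion then gives $(3(\omega-1)-2)+3=3\omega-2$.

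\textbf{Lower bound.} Your plan builds a single copy of $H_{\omega-1}$ and then probes to its right. The paper's adversary is different and, crucially, needs more than one copy: it plays $\KT\langle\omega-1\rangle$ a large number $N$ of times on disjoint regions, uses pigeonhole to find four copies $T_a,T_b,T_c,T_d$ that received the \emph{same} set of $3\omega-5$ colors, and then introduces long intervals over those four copies. Two outer intervals cover $T_a$ and $T_d$; depending on whether the algorithm gives them equal or distinct colors, the adversary finishes with either one long interval over $T_b\cup T_c$ or two chained intervals over $T_b$ and $T_c$. Either branch forces three colors outside the shared $3\omega-5$. A single copy of $H_{\omega-1}$ cannot support this case split, because you have no second identically-colored block to threaten, and ``working to the right of $H_{\omega-1}$'' gives the new intervals no forced conflicts with the old colors unless they overlap $H_{\omega-1}$ itself. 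Your three-batch outline does not supply the pigeonhole step or the four-block geometry that makes the argument go through.
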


In 1988, Chrobak and \'Slusarek \cite{chr-slu-88} considered the same problem except that the online algorithm received the interval representation as input. Despite the additional information provided to the algorithm, they arrived at the same conclusion. They also considered the problem restricted to unit-interval graphs. They showed that the algorithm $\ff$ never uses more than $2\omega -1$ colors on a unit-interval graph of clique number at most $\omega$. Additionally, they presented a strategy that forces $\ff$ to use $2\omega-1$ colors on a unit-interval graph of clique number $\omega$.
\begin{theorem}[Chrobak and \'{S}lusarek \cite{chr-slu-88}]
    The algorithm $\ff$ uses at most $2\omega-1$ colors to color any unit-interval graph of clique number at most $\omega$, and there is a unit-interval graph of clique number $\omega$ for which $\ff$ uses $2\omega-1$ colors.
    \label{thm:cs}
\end{theorem}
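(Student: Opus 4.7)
The plan is to prove the two halves separately. For the upper bound, I would use a degree count tailored to unit intervals. Fix any unit-interval representation with $I_v=[a_v,a_v+1]$, and split the neighbors of $v$ into \emph{left neighbors}, whose left endpoints lie in $[a_v-1,a_v]$, and \emph{right neighbors}, whose left endpoints lie in $(a_v,a_v+1]$. Every left neighbor contains the point $a_v$, so together with $v$ they form a clique of size at most $\omega$, yielding at most $\omega-1$ left neighbors; the same argument at the point $a_v+1$ bounds the right neighbors by $\omega-1$. Hence $\deg(v)\le 2\omega-2$, and since $\ff$ assigns $v$ the smallest color not already used on a presented neighbor, the color of $v$ is at most $\deg(v)+1\le 2\omega-1$.

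For the lower bound, I would construct, by induction on $\omega$, a unit-interval graph $G_\omega$ of clique number $\omega$ together with a presentation order $\sigma_\omega$ that forces $\ff$ to assign color $2\omega-1$ to a designated vertex $v^{\ast}_\omega$. The base case $\omega=1$ is a single interval. In the inductive step, $v^{\ast}_\omega$ is given $\omega-1$ earlier left neighbors and $\omega-1$ earlier right neighbors, with the left block realizing the colors $\{1,\ldots,\omega-1\}$ and the right block realizing the shifted palette $\{\omega,\ldots,2\omega-2\}$; together these block every color up to $2\omega-2$, forcing color $2\omega-1$ on $v^{\ast}_\omega$. The left block is simply an $(\omega-1)$-clique presented in order, taking colors $1,\ldots,\omega-1$ greedily. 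The right block I would force by planting a ``feeder'' gadget---essentially a translated instance of the previous-level construction---to the right of the right-neighbor clique, positioned so that each right neighbor of $v^{\ast}_\omega$ sees exactly one fresh feeder vertex of every color in $\{1,\ldots,\omega-1\}$ before it is presented, but so that the feeder vertices are not themselves adjacent to $v^{\ast}_\omega$.

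The main obstacle is verifying that the right-side feeder can be inserted while simultaneously preserving the unit-interval property, the clique bound $\omega$, and the intended color assignment. Maintaining length exactly $1$ while realizing a precise adjacency pattern demands tight control over left endpoints, which I would obtain by placing new vertices at positions of the form $a+i\,\epsilon$ for a sufficiently small $\epsilon>0$ and carefully chosen integer offsets $i$. The clique bound is the most delicate constraint, since each feeder layer introduces new overlaps; I would carry an inductive invariant asserting that at most $\omega$ intervals overlap at any point of the line. Assuming this invariant, the color-forcing claim then follows by induction, because each right neighbor of $v^{\ast}_\omega$ inherits the color its feeder was designed to produce, offset by its own position within the right-block clique.
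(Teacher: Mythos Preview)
Your upper-bound argument is correct and is the standard one: every unit interval has at most $\omega-1$ left neighbors (they all contain $a_v$) and at most $\omega-1$ right neighbors (they all contain $a_v+1$), so $\ff$ never exceeds color $2\omega-1$. The paper does not spell this out, but this is exactly the intended proof.

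For the lower bound your recursive plan diverges from the paper and is substantially more complicated than necessary. The paper's strategy $\CS\langle k,a\rangle$ is \emph{non-recursive}: it presents a left $k$-clique (intervals $[a+i\lambda,a+i\lambda+1]$, receiving colors $1,\dots,k$), a disjoint right $k$-clique (again receiving colors $1,\dots,k$), and then $k-1$ ``middle'' intervals, each shifted by $\lambda=1/k$ so that the $j$-th middle interval meets exactly enough left and right intervals to block colors $1,\dots,k+j-1$ and is forced to color $k+j$. The clique bound is immediate from the staggered endpoints, and the whole picture sits inside $[a,a+4]$.

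Your inductive scheme has a specific loose end. You say the ``feeder'' is a translated copy of $G_{\omega-1}$, but $G_{\omega-1}$ carries colors $1,\dots,2\omega-3$, not just $1,\dots,\omega-1$. For a right-block vertex to be forced to color exactly $\omega+j-1$, it must intersect feeder vertices of colors $1,\dots,\omega-1$ while \emph{avoiding} every feeder vertex of color $\geq\omega$; nothing in your invariant guarantees that the low-colored feeder vertices can be geometrically separated from the high-colored ones by a single unit interval. If instead you replace the feeder by a plain $(\omega-1)$-clique (which already receives colors $1,\dots,\omega-1$ under $\ff$), the recursion collapses to a single step and you recover precisely the three-block construction of $\CS\langle\omega,a\rangle$. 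So the recursion is either unnecessary or introduces an extra structural hypothesis you have not stated; the direct construction sidesteps all of this.
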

In 2005, Epstein and Levy \cite{eps-lev-05} constructed a strategy presenting unit intervals that forces any online algorithm to use $3k$ colors on a unit-interval graph of clique number at most $2k$. 
\begin{theorem}[Epstein and Levy \cite{eps-lev-05}]
    There is no online algorithm that uses less than $\lfloor1.5\omega\rfloor$ colors on any unit-interval graph of clique number at most $\omega$ with known interval representation.
    \label{thm:el}
\end{theorem}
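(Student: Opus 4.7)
My approach is to construct an adaptive adversary that presents unit intervals in stages, keeping the clique number bounded by $2k$ while forcing $3k$ distinct colors. The scaffold is a \emph{left batch} $L$ of $k$ copies of $[0,1]$, a \emph{right batch} $R$ of $k$ copies of $[2,3]$, followed by a bridging stage whose composition depends on the coloring of $L\cup R$. After $L$ is revealed the algorithm uses a color set $C_L$ with $|C_L|\ge k$; after $R$ it uses $C_R$ with $|C_R|\ge k$. Let $t=|C_L\cap C_R|$, a quantity the adversary can read off before choosing the bridge.

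The two extreme values of $t$ are handled by opposite bridging strategies. If $t=0$, the adversary plays a batch $M$ of $k$ copies of $[1,2]$: every member of $M$ meets every member of $L$ at the point $1$ and every member of $R$ at the point $2$, so $M$ must use $k$ colors disjoint from $C_L\cup C_R$, giving exactly $3k$ colors in total, with $\omega=2k$ witnessed at the points $1$ and $2$. If $t=k$, so that $C_L=C_R$, the adversary instead plays two staggered batches $M_1,M_2$ of $k$ copies each of $[0.5,1.5]$ and $[1.5,2.5]$: $M_1$ must avoid $C_L$, $M_2$ must avoid $C_R\cup C_{M_1}=C_L\cup C_{M_1}$, and since these two sets are disjoint this again yields $3k$ colors, with $\omega=2k$ witnessed at $0.5,1.5,2.5$. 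The warm-up $k=1$ is already illustrative: after $[0,1]$ and $[2,3]$ the adversary plays $[1,2]$ when the two intervals received different colors and the pair $[0.5,1.5],[1.5,2.5]$ when they received the same color, in either case reaching three colors while $\omega=2$.

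The main obstacle is the intermediate regime $0<t<k$, where neither pure bridging strategy alone forces $3k$ colors: the single bridge $M$ delivers only $3k-t$ colors and the staggered pair $M_1,M_2$ delivers only $2k+t$ colors in the worst case. I would handle this by a hybrid construction that plays a central bridge of $k-t$ intervals at $[1,2]$ together with $t$ staggered intervals at $[0.5,1.5]$ and $t$ at $[1.5,2.5]$, positioned so that no point is covered by more than $2k$ intervals. The combinatorial claim is that each of the $t$ shared colors in $C_L\cap C_R$ is converted, through the staircase, into an additional forced color. I expect to prove this by induction on $t$, partitioning $L$ and $R$ into the sub-batches colored from $C_L\cap C_R$ versus from $C_L\triangle C_R$ and recursing on the sub-batch using the shared palette, exploiting the fact that the shared colors are simultaneously blocked at both endpoints of the gap. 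Verifying that the hybrid is both geometrically legal and combinatorially tight is the delicate step. For $\omega=2k+1$, appending a single extra interval to the central bridge raises the count to $\lfloor 1.5\omega\rfloor=3k+1$.
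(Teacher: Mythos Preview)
Your two-batch framework handles the extremes $t=0$ and $t=k$ correctly, but the intermediate case $0<t<k$ is a genuine gap, not a bookkeeping detail. The concrete hybrid you describe --- $k-t$ copies of $[1,2]$ together with $t$ copies each of $[0.5,1.5]$ and $[1.5,2.5]$ --- does not force $3k$ colors. Take $k=2$, $C_L=\{1,2\}$, $C_R=\{1,3\}$, so $t=1$: the algorithm colors $[1,2]$ with $4$, then $[0.5,1.5]$ with $3$ (it neighbors only $L$ and the center, so $\{1,2,4\}$ are forbidden), then $[1.5,2.5]$ with $2$ (it neighbors $R$, the center, and the left stagger, so only $\{1,3,4\}$ are forbidden). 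Four colors total instead of six, with clique number $4$ throughout. Your proposed induction on $t$ cannot rescue this: the sub-batches of $L$ and $R$ are already pinned at $[0,1]$ and $[2,3]$, so there is no geometric room to recurse, and the colors in $C_L\triangle C_R$ can be freely traded across the gap by the algorithm. The structural problem is that committing to \emph{both} outer batches before any bridge interval gives the algorithm enough information to set up exactly this color-trading.

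Epstein and Levy's actual strategy, recorded in the paper as $\EL\langle\omega,a\rangle$, avoids this by committing to only \emph{one} batch up front. Phase~I places $k$ copies of $[a,a+1]$, receiving colors $\{1,\dots,k\}$. Phase~II presents $2k$ unit intervals one at a time via bisection: a window $[L_{i-1},R_{i-1}]\subseteq[a+1,a+3]$ is maintained, the next interval $x_i$ is centered at its midpoint, and the window is then contracted to one side or the other depending on whether $x_i$ received a color in $\{1,\dots,k\}$ or a new color. Only after all $2k$ Phase~II intervals are placed does the adversary read off the position of the final batch of $k$ identical intervals (Phase~III). The per-interval adaptivity inside Phase~II is the mechanism your bridge lacks; no single aggregate statistic like your $t$, computed once, can substitute for it.
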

In 2023, Bir\'o and Curbelo \cite{bir-cur-23} showed that the strategy could be improved to force one more color for odd clique numbers. 

In 2008, Micek presented a strategy that forces any online coloring algorithm to use at least $2\omega-1$ colors on a unit-interval graph of clique number at most $\omega$. A proof was later published in the survey paper by Bosek et al. \cite{survey}.

\begin{theorem}[Micek; Bosek et al. \cite{survey}]
    There is no online algorithm that uses less than $2\omega-1$ colors on any unit-interval graph of clique number at most $\omega$.
\end{theorem}

In 2024, Chybowska-Sok\'ol et al. \cite{cgjmp-24} studied 
online coloring of $\sigma$-interval graphs and showed that the competitive ratio can be arbitrarily close to $2.5$ for increasing values of $\sigma$.

\begin{theorem}[Chybowska-Sok\'ol et al. \cite{cgjmp-24}]
    For every $\varepsilon>0$, there is a $\sigma>1$ such that there is no online algorithm with competitive ratio less than $2.5-\varepsilon$ for online coloring of $\sigma$-interval graphs with known representation.
\end{theorem}

\subsection{Our work}

In this paper, we show that for every $\varepsilon>0$, there is a $\sigma>1$ such that there is no online algorithm with competitive ratio less than $3-\varepsilon$ for $\sigma$-interval coloring. 
\begin{theorem}
    For every $\varepsilon>0$, there is a $\sigma>1$ such that there is no online algorithm with competitive ratio less than $3-\varepsilon$ for coloring of $\sigma$-interval graphs.
    \label{thm:1}
\end{theorem}

Additionally, we study the online coloring of 2-count interval graphs. In this setting, we consider both online algorithms that receive the interval representation as input and online algorithms that do not. We show that there is no online algorithm with competitive ratio less than $2.5$ when the interval representation is unknown, and that there is no online algorithm with competitive ratio less than $2$ when the interval representation is known.   

\begin{theorem}
    For every $\varepsilon>0$, there is no $(2.5-\varepsilon)$-competitive online coloring algorithm for 2-count interval graphs.
    \label{thm:two-len}
\end{theorem}

\begin{theorem}
    For every $\varepsilon>0$, there is no $(2-\varepsilon)$-competitive online coloring algorithm for 2-count interval graphs with known interval representation. 
    \label{thm:final}
\end{theorem}


In Section \ref{sec:sig}, we prove our result for $\sigma$-interval graphs, and in Section \ref{sec:two}, we prove our results for 2-count interval graphs.

\section{Online coloring of $\sigma$-interval graphs}\label{sec:sig}

\begin{figure}
    \centering
    \begin{tikzpicture}[yscale=3,xscale=0.74,thick, fill opacity=0.2]
        \foreach \x in {1,...,14}{
            \draw[fill opacity =1] (\x,1.03) -- (\x,.97) node[below]{\x};
        }
        \draw[fill opacity =1] (15.5,1.03) -- (15.5,0.97)node[below]{$\infty$};
        \foreach \y in {1,...,3}{
            \draw[fill opacity =1] (1.1,\y) -- (0.9,\y) node[left]{\y};
        }
        \draw[blue,fill=blue] (5,1.5) rectangle (5.7,1.6);
        \node[fill opacity=1] (b) at (5.7,1.55)[right]{unknown interval representation (NEW)};
        \draw[red,fill=red] (5,1.35) rectangle (5.7,1.45);
        \node[fill opacity=1] (b) at (5.7,1.4)[right]{known interval representation \cite{chr-slu-88,cgjmp-24}};
        \draw[dotted] (5,1.25) -- (5.7,1.25);
        \node[fill opacity=1] (b) at (5.7,1.25)[right]{previously best lower bounds on $\ff$ \cite{cgjmp-24}};
        
        \fill[red] (1,1.66) -- (2,1.66) -- (2,1.75) -- (8,1.75) -- (8,1.81) -- (14.3,1.81) -- (15,2.5) -- (16,2.5) -- (16,3) -- (14.3,3) -- (2,3) -- (1,2) -- (1,1.66);
        \fill[blue] (1,2) -- (2,2) -- (2,2.25) -- (3,2.25) -- (3,2.5) -- (8,2.5) -- (8,2.625) -- (12,2.625) -- (12,2.75) -- (14.3,2.75) -- (15,3) -- (14.3,3) -- (1,3) -- (1,2);
        \draw[draw = white,fill=white, fill opacity=1] (14.3,1.8) rectangle (15,3);
        \draw[blue, dotted, domain=14.3:15, samples=100]
  plot (\x, {3 + 0.25/(ln(10)-ln(3))*ln(\x-14)});
        \draw[red, dotted, domain=14.3:15, samples=100]
  plot (\x, {2.5 + 0.69/(ln(10)-ln(3))*ln(\x-14)});
        \draw[red, fill opacity=1] (15,2.5) -- (16,2.5)node[right]{\cite{cgjmp-24}};

        \draw[fill opacity=1] (7.5,3) node[above]{\textcolor{red}{\cite{chr-slu-88,cgjmp-24}}};
        \draw[fill opacity=1] (14.3,1.81) node[right]{\textcolor{red}{\cite{cgjmp-24}}};
        \draw[fill opacity=1] (3,2) node[right]{\textcolor{black}{\cite{cgjmp-24}}};
        \draw[fill opacity=1] (12,2.5) node[right]{\textcolor{black}{\cite{cgjmp-24}}};
        \draw[fill opacity=1] (1.5,3) node[above]{\textcolor{blue}{\cite{kie-tro-81}}};
        \draw[fill opacity=1] (1,1) -- (1,3.2) node[above]{ratio};
        \draw[fill opacity=1] (1,1) -- (14.3,1) node[right]{};
        \node at (14.65,1){$\ldots$};
        \draw[fill opacity=1] (15,1) -- (16,1) node[right]{$\sigma$};
        \draw[dotted] (1,2) -- (3,2);
        \node[fill=black, fill opacity=1, circle, transform shape=false, inner sep=1.5pt] at (3,2){};
        \draw[dotted] (3,2.5) -- (12,2.5);
        \node[fill=black, fill opacity=1, circle, transform shape=false, inner sep=1.5pt] at (12,2.5){};
        
        \draw[red] (1,2) -- (2,3) -- (9,3);
        \draw[red] (1,1.66) -- (2,1.66);
        \node[fill=red, fill opacity=1, circle, transform shape=false, inner sep=1.5pt] at (2,1.66){};
        \draw[red] (2,1.75) -- (8,1.75);
        \draw[red] (8,1.81) -- (14.3,1.81);
        \node[draw=red, fill=white, fill opacity=1, circle, transform shape=false, inner sep=1.5pt] at (2,1.75){};
        \node[fill=red, fill opacity=1, circle, transform shape=false, inner sep=1.5pt] at (8,1.75){};
        \node[draw=red, fill=white, fill opacity=1, circle, transform shape=false, inner sep=1.5pt] at (8,1.81){};
        \node[draw=red, fill=white, fill opacity=1, circle, transform shape=false, inner sep=1.5pt] at (1,1.66){};
        
        \draw[blue] (8,2.625) -- (12,2.625);
        \draw[blue] (1,3) -- (14.3,3);
        \draw[violet] (2,3) -- (16,3);
        \draw[blue] (12,2.75) -- (14.3,2.75);
        \draw[blue] (1,2) -- (2,2);
        \draw[blue] (2,2.25) -- (3,2.25);
        \draw[blue] (3,2.5) -- (8,2.5);

        \node[draw=blue, fill=white, fill opacity=1, circle, transform shape=false, inner sep=1.5pt] at (8,2.625){};
        \node[fill=blue, fill opacity=1, circle, transform shape=false, inner sep=1.5pt] at (12,2.625){};
        \node[fill=blue, fill opacity=1, circle, transform shape=false, inner sep=1.5pt] at (8,2.5){};
        \node[draw=blue, fill=white, fill opacity=1, circle, transform shape=false, inner sep=1.5pt] at (12,2.75){};
        \node[fill=blue, fill opacity=1, circle, transform shape=false, inner sep=1.5pt] at (1,2){};
        \node[fill=blue, fill opacity=1, circle, transform shape=false, inner sep=1.5pt] at (2,2){};
        \node[draw=blue, fill=white, fill opacity=1, circle, transform shape=false, inner sep=1.5pt] at (2,2.25){};
        \node[fill=blue, fill opacity=1, circle, transform shape=false, inner sep=1.5pt] at (3,2.25){};
        \node[draw=blue, fill=white, fill opacity=1, circle, transform shape=false, inner sep=1.5pt] at (3,2.5){};
        \node[fill=blue, fill opacity=1, circle, transform shape=false, inner sep=1.5pt] at (1,2){};
    \end{tikzpicture}
    \caption{Gap comparison between the competitive ratios for online coloring of $\sigma$-interval graphs with known interval representation and with unknown interval representation.}
    \label{fig:gap}
\end{figure}

In this section, we study the online coloring of $\sigma$-interval graphs. Theorem \ref{thm:kt} provides an upper bound of $3$ on the competitive ratio for online coloring of $\sigma$-interval graphs for every $\sigma\geq 1$. Chybowska-Sok\'ol et al. \cite{cgjmp-24} provided non-trivial lower bounds on the competitive ratio for algorithms with known interval representation. Since withholding the interval representation cannot help the online algorithm, every lower bound from the known-representation setting also applies when the representation is unknown. They showed that for any $\varepsilon>0$, there is a $\sigma>1$ such that the competitive ratio for online coloring of $\sigma$-interval graphs with known representation is at least $\frac{5}{2}-\varepsilon$. We improve this to $3-\varepsilon$ when the interval representation is unknown to the algorithm. See Figure \ref{fig:gap} for a comparison of the results for $\sigma\in[1,14]$ and for $\sigma=\infty$.

\begin{defin}
    Let $\omega$ be a positive integer, let $S_\omega$ be a strategy for constructing an interval graph $G$ of clique number $\omega$, and let $\rho$ and $\beta$ be real numbers. If \(S_\omega\) forces any online coloring algorithm to produce a component of \(G\) on which \(\rho\omega+\beta\) colors are used, and $G$ has an interval representation $I$ such that the image of each component of $G$ under $I$ is contained in an interval of length $\lambda$, then we call $S_\omega$ a \emph{$\langle\omega,\rho,\beta,\lambda\rangle$-strategy}.
\end{defin}

\begin{lemma}\label{lemma:strat}
    Let $\rho$, $\beta$ and $\lambda$ be constants, $\omega$ be a positive integer, $n$ and $N$ be non-negative integers, and let $S_\omega$ be an $\langle{\omega,\rho,\beta,\lambda\rangle}$-strategy. For any $\delta>0$, there is a strategy $S$ such that $S$ constructs a $(4^{n-1}\lambda+\delta/4)$-interval graph, and either
    \begin{enumerate}
        \item[(i)] \(S\) is a \(\langle 2^n\omega, \rho + 3(2^n - 1), \beta, 4^n \lambda + \delta \rangle\)-strategy, or
        \item[(ii)] \(S\) is a \(\langle 2^n\omega, N, \beta, \Lambda \rangle\)-strategy for some constant \(\Lambda\).
    \end{enumerate}
\end{lemma}
    
\begin{proof}

     \begin{figure}
        \centering
        \begin{tikzpicture}[yscale=0.6,xscale=0.44]

        \def\xA{-1}
        \def\xB{0}
        \def\xC{4}
        \def\xD{5}
        \def\xE{5}
        \def\xF{6}
        \def\xG{10}
        \def\xH{11}
        \def\xI{11}
        \def\xJ{12}
        \def\xK{16}
        \def\xL{17}
        \def\xM{17}
        \def\xN{18}
        \def\xO{22}
        \def\xP{23}
        
        \draw[dotted] (\xA,1.2)--(\xA,5.5) node[above]{$\scriptstyle 0$};
        \draw[dotted] (\xD,1.2)--(\xD,5.5) node[above]{$\scriptstyle 4^n\lambda+\frac{\delta}{4}$};
        \draw[dotted] (\xH,2.4)--(\xH,5.5) node[above]{$\scriptstyle 2(4^n\lambda+\frac{\delta}{4})$};
        \draw[dotted] (\xL,1.2)--(\xL,5.5) node[above]{$\scriptstyle 3(4^n\lambda+\frac{\delta}{4})$};
        \draw[dotted] (\xP,1.2)--(\xP,5.5) node[above]{$\scriptstyle 4(4^n\lambda+\frac{\delta}{4})$};
        
        \fill[yellow!50!] (\xB,0) rectangle (\xC,1.1);
        \node at ({(\xB+\xC)/2},0.5){$\scriptstyle H_a$};
        
        \fill[yellow!50!] (\xF,0) rectangle (\xG,1.1);
        \node at ({(\xF+\xG)/2},0.5){$\scriptstyle H_{1}$};
        
        \fill[yellow!50!] (\xJ,0) rectangle (\xK,1.1);
        \node at ({(\xJ+\xK)/2},0.5){$\scriptstyle H_{2}$};
        
        \fill[yellow!50!] (\xN,0) rectangle (\xO,1.1);
        \node at ({(\xN+\xO)/2},0.5){$\scriptstyle H_{b}$};
        
        \fill[red!40!] (\xA,1.2) rectangle (\xE,2.3);
        \path[thick,pattern=horizontal lines, pattern color = white] (\xA,1.2) rectangle (\xE,2.3);
        \node at ({(\xA+\xE)/2},1.7){$\scriptstyle K_{\omega_n}^{a}$};
        
        \fill[red!40!] (\xL,1.2) rectangle (\xP,2.3);
        \path[thick,pattern=horizontal lines, pattern color = white] (\xL,1.2) rectangle (\xP,2.3);
        \node at ({(\xL+\xP)/2},1.7){$\scriptstyle K_{\omega_n}^{b}$};
        
        \fill[blue!40!] (\xD,2.4) rectangle (\xI,3.5);
        \path[thick,pattern=horizontal lines, pattern color = white] (\xD,2.4) rectangle (\xI,3.5);
        \node at ({(\xD+\xI)/2},2.9){$\scriptstyle K_{\omega_n}^{1}$};
        
        \fill[green!40!] (\xH,3.6) rectangle (\xM,4.7);
        \path[thick,pattern=horizontal lines, pattern color = white] (\xH,3.6) rectangle (\xM,4.7);
        \node at ({(\xH+\xM)/2},4.1){$\scriptstyle K_{\omega_n}^{2}$};
        
        \draw[|-|] ({\xB-0.5},0) -- ({\xB-0.5},1.1);
        \node[left] at ({\xB-0.5},0.5){$\scriptstyle {2^n\omega}$};
        
        \draw[|-|] ({\xA-0.5},1.2) -- ({\xA-0.5},2.3);
        \node[left] at ({\xA-0.5},1.7){$\scriptstyle {2^n\omega}$};
        
        \draw[|-|] ({\xD-0.5},2.4) -- ({\xD-0.5},3.5);
        \node[left] at ({\xD-0.5},2.9){$\scriptstyle {2^n\omega}$};
        
        \draw[|-|] ({\xH-0.5},3.6) -- ({\xH-0.5},4.7);
        \node[left] at ({\xH-0.5},4.1){$\scriptstyle {2^n\omega}$};
        
        \foreach \L/\R in {\xB/\xC, \xF/\xG, \xJ/\xK, \xN/\xO}{
            \draw[|-|] (\L,-0.5) -- (\R,-0.5);
            \node[below] at ({(\L+\R)/2},-0.5){$\scriptstyle {4^n\lambda+\frac{\delta}{12}}$};
        }
        
        \draw[|-|] (\xA,-2) -- (\xP,-2);
        \node[below] at ({(\xA+\xP)/2},-2){$\scriptstyle 4^{n+1}\lambda+\delta$};
        
        \end{tikzpicture}
        \caption{An interval representation for the component $H$ constructed by the $\langle{2^n\omega,\rho+3(2^n-1),\beta,4^n\lambda+\delta\rangle}$-strategy in Lemma \ref{lemma:strat} with $[\rho +3(2^{n+1}-1)]\omega+\beta$ colors.}
        \label{fig:bridge}
    \end{figure}
        
    Let $\rho$, $\beta$ and $\lambda$ be constants, let $\omega$ be a positive integer, let $N$ be a non-negative integer, and let $S_\omega$ be an $\langle{\omega,\rho,\beta,\lambda\rangle}$-strategy. For each positive integer $k$, let $\omega_k=2^k\omega$, $\rho_k=\rho+3(2^k-1)$, $\lambda_k=4^k\lambda$ and $\delta'=\delta/12$. 
    If $n=0$, then the proof is trivial. Hence, we assume that for some integer $n\geq 1$, there is a strategy $S_{\omega_n}$ such that $S_{\omega_n}$ constructs a $(\lambda_{n-1}+3\delta')$-interval graph, and $S_{\omega_n}$ is a $\langle{\omega_n,\rho_n,\beta,\lambda_n+\delta'\rangle}$-strategy or $S_{\omega_n}$ is a $\langle{\omega_n,N,\beta,\Lambda\rangle}$-strategy for some $\Lambda\in\mathbb{R}$. We prove the result inductively on $n$. If $S_{\omega_n}$ is a $\langle{\omega_n,N,\beta,\Lambda\rangle}$-strategy for some $\Lambda\in\mathbb{R}$, then we are done. Therefore, we assume that $S_{\omega_n}$ is a $\langle{\omega_n,\rho_n,\beta,\lambda_n+\delta'\rangle}$-strategy.
    If $N\leq \rho_n$, then $S_{\omega_n}$ is a $\langle{\omega_n,N,\beta,\lambda_n+\delta'\rangle}$-strategy and we are done, so we assume that $N>\rho_n$. Note that if the algorithm ever uses $N\omega+\beta$ colors, then we can connect the components with those colors into one, and we are done. Therefore, we may assume that the algorithm never uses $N\omega+\beta$ colors. Let $m=\binom{N\omega-\rho_n\omega-1}{\omega_n}+3$. The strategy consists of two phases.
    
    In Phase I, we play the strategy $S_{\omega_n}$ repeatedly on disjoint sets of vertices until there are $m$ interval graphs $G_1,\ldots,G_m$ of clique number at most $\omega_n$ such that for some set $\mathcal{H}$ of $\rho_n\omega+\beta$ colors and for each $i\in\{1,\ldots,m\}$, the graph $G_i$ has a component $H_i$ that was assigned every color in $\mathcal{H}$, and an interval representation $I_i$ so that the image of each component of $G_i$ under $I_i$ is contained in an interval of length $\lambda_n+\delta'$. This is guaranteed by the pigeonhole principle since the algorithm cannot use $N\omega+\beta$ colors. 
    
    In Phase II, we construct $m-2$ cliques $K_{\omega_n}^3,\ldots,K_{\omega_n}^{m}$ of size $\omega_n$ so that if $i\in\{3,\ldots,m\}$ and $u\in K_{\omega_n}^i$, then $uv$ is an edge if and only if $v\in H_i$. By the pigeonhole principle, there are integers $a$ and $b$ such that $K_{\omega_n}^a$ and $K_{\omega_n}^b$ are assigned the same $\omega_n$ colors. Finally, we construct two cliques $K_{\omega_n}^{1}$ and $K_{\omega_n}^2$ of size $\omega_n$ so that if $u\in K_{\omega_n}^{1}$, then $uv$ is an edge if and only if $v\in H_{1}\cup K_{\omega_n}^a$, and if $u\in K_{\omega_n}^{2}$, then $uv$ is an edge if and only if $v\in H_{2}\cup K_{\omega_n}^{1}\cup K_{\omega_n}^b$. Figure \ref{fig:bridge} illustrates the component with a total of 
    \begin{align*}
    \rho_n\omega+\beta+3\omega_n
    &=[\rho+3(2^n-1)]\omega+\beta+3(2^n\omega)\\
    &=[\rho+3(2^n-1)+3(2^n)]\omega+\beta\\
    &=[\rho+3(2^{n+1}-1)]\omega+\beta\\
    &=\rho_{n+1}\omega+\beta
    \end{align*}
    colors under a $(4^n\lambda+\delta/4)$-interval representation $I$. Let $H$ denote this component. It is immediate from Figure \ref{fig:bridge} that the image of $H$ under $I$ is contained in an interval of length 
    \begin{align*}
    4(\lambda_n+d')+8d'&=4(4^n\lambda+\delta/12)+8\delta/12\\&=4^{n+1}\lambda+\delta
    \end{align*} 
    and has clique number $\omega_n+\omega_n=2^{n+1}\omega$. It is easy to verify that there is an interval representation $I$ so that the images of the remaining components under $I$ are contained in intervals of length $4^n\lambda+\delta$ and have clique number at most $2^{n+1}\omega$. 

\end{proof}

\begin{theorem}\label{thm1}
    Let $\rho$, $\beta$ and $\lambda$ be real numbers. For each positive integer $\omega$, let $S_\omega$ be a $\langle{\omega,\rho,\beta,\lambda\rangle}$-strategy. For every $\varepsilon>0$ and $\delta>0$, there is no $(3-\varepsilon)$-competitive online coloring algorithm for $\left({\lambda(3-\rho)^2}{\varepsilon^{-2}}+\delta\right)$-interval graphs.    
\end{theorem}

\begin{proof}
    Let $k$ be a positive integer, let $\rho$, $\beta$ and $\lambda$ be real numbers, and let $\varepsilon>0$ and $\delta>0$. If $\varepsilon\geq 3-\rho$, then $\rho\geq 3-\varepsilon$ and the proof is trivial. Therefore, we assume that $\varepsilon<3-\rho$ and choose $n=\lceil\log(\frac{3-\rho}{\varepsilon})\rceil$. Let $S$ be a $\langle{2^{k-n},\rho,\beta,\lambda\rangle}$-strategy. 
    By Lemma \ref{lemma:strat}, there is a $\langle{2^n2^{k-n}, \rho+3(2^n-1), \beta, 4^n\lambda+\delta\rangle}$-strategy that constructs a $(4^{n-1}\lambda+\delta)$-interval graph. Therefore, for every positive integer $k$, we force 
    \begin{align*}
    \bigl[\rho+3(2^n-1)\bigr]2^{k-n}+\beta
    &= \bigl[2^{-n}\rho+3(1-2^{-n})\bigr]2^k+\beta\\
    &= \bigl[2^{-n}\rho+3-3\cdot 2^{-n}\bigr]2^k+\beta\\
    &= \bigl[3-(3-\rho)\cdot 2^{-n}\bigr]2^k+\beta\\
    &= \bigl[3-(3-\rho)2^{-\lceil\log(\frac{3-\rho}{\varepsilon})\rceil}\bigr]2^k+\beta\\
    &\ge [3-\varepsilon]2^k+\beta
    \end{align*}
    colors on a $\left({\lambda(3-\rho)^2}{\varepsilon^{-2}}+\delta\right)$-interval graph of clique number at most $2^k$.
\end{proof}

Finally, in order to provide our best bounds shown in Figure \ref{fig:gap}, we present strategies inspired by those of Epstein and Levy \cite{eps-lev-05} and Micek \cite{mic-08,survey}. 

\begin{theorem}[Epstein and Levy \cite{eps-lev-05}]
    Let $\omega$ be a positive integer and let $\varepsilon>0$. Then, there is an $\langle\omega,1.5,-0.5,2+\varepsilon\rangle$-strategy. Moreover, the strategy constructs a unit-interval graph.
    \label{el}
\end{theorem}

\begin{proof}

    \begin{figure}
    \begin{tikzpicture}[scale=0.75]
        \draw[fill=red!20] (0,0) rectangle (4,2);
        \path[thick,pattern=horizontal lines, pattern color = white] (0,0) rectangle (4,2);
        \node at (2,1){$K_{\left\lfloor\omega/2\right\rfloor}$};
        \draw[fill=blue!20] (4.25,0) rectangle (8.25,2);
        \path[thick,pattern=horizontal lines, pattern color = white] (4.25,0) rectangle (8.25,2);
        \node at (6.25,1){$H$};
        \draw[fill=gray!10] (4.5,2.2) rectangle (8.5,4.2);
        \path[thick,pattern=horizontal lines, pattern color = white] (4.5,2.2) rectangle (8.5,4.2);
        \node at (6.5,3.2){$K_{\omega}-H$};
        \draw[fill=green!20] (0.25,2.2) rectangle (4.25,4.2);
        \path[thick,pattern=horizontal lines, pattern color = white] (0.25,2.2) rectangle (4.25,4.2);
        \node at (2.25,3.2){$K_{\left\lceil\omega/2\right\rceil}$};
        \draw[|-|] (0,-0.3) -- (8.5,-0.3);
        \draw[|-|] (-0.35,0) -- (-0.35,2);
        \draw[|-|] (-0.05,2.2) -- (-0.05,4.2);
        \node at (-1,1){$\left\lfloor\omega/2\right\rfloor$};
        \node at (-0.75,3.2){$\left\lceil\omega/2\right\rceil$};
        \node at (4.3,-1){$2+\varepsilon$};
    \end{tikzpicture}
    \caption{A unit-interval representation for the $\langle\omega,1.5,-0.5,2+\varepsilon\rangle$-strategy in Theorem \ref{el}.}
    \label{figeps}
    \end{figure}
    
    Let $\omega$ be a positive integer, let $\varepsilon'>0$ and let $\varepsilon=\min\{\varepsilon',0.1\}$. Present a clique $K_{\lfloor\omega/2\rfloor}$ of size $\lfloor\omega/2\rfloor$ and a clique $K_{\omega}$ of size $\omega$. The algorithm must assign $K_{\omega}$ at least $\lceil\omega/2\rceil$ colors not used on $K_{\lfloor\omega/2\rfloor}$. Let $H$ be a subset of $V(K_{\omega})$ of size $\lfloor\omega/2\rfloor$ with colors not used on $K_{\lfloor\omega/2\rfloor}$. Present one more clique $K_{\lceil\omega/2\rceil}$ of size $\lceil\omega/2\rceil$ with edges between each of its vertices and each of the vertices in $K_{\lfloor\omega/2\rfloor}$ and $H$, but no edges between its vertices and $K_{\omega}$. 
    
    We define a unit-interval representation $I$ for this graph by
    \[
    I(v)=
    \begin{cases}
    \left[0,1\right], & \text{if } v\in K_{\lfloor\omega/2\rfloor},\\[6pt]
    \left[1+\frac{\varepsilon}{2},2+\frac{\varepsilon}{2}\right], & \text{if } v\in H,\\[6pt]
    \left[1+\varepsilon,2+\varepsilon\right], & \text{if } v\in K_{\omega}-H,\\[6pt]
    \left[\frac{\varepsilon}{2},1+\frac{\varepsilon}{2}\right], & \text{if } v\in K_{\lceil\omega/2\rceil}.
    \end{cases}
    \]
    The interval representation $I$ is illustrated in Figure \ref{figeps}. Since $\varepsilon\leq \varepsilon'$, this concludes the proof.
    
\end{proof}

\begin{theorem}[Micek \cite{mic-08}, Bosek et al. \cite{survey}]
    Let $\omega$ be a positive integer and let $\varepsilon>0$. Then, there is a $\langle\omega,2,-1,3+\varepsilon\rangle$-strategy. Moreover, the strategy constructs a unit-interval graph.
    \label{mic}
\end{theorem}

\begin{proof}

    \begin{figure}
    \begin{tikzpicture}[x=3cm,y=0.45cm]
    
    \def\om{9}          
    \def\t{5}           
    \def\eps{0.4}       

    \draw[red, dashed, fill=red!10] (1.9,0.45) rectangle (3.07+\eps,\t+0.45);
    \draw[red, dashed, fill=red!10] (-0.13,0.45) rectangle (1+\eps,\t+0.45);
    \draw[blue, dashed, fill=blue!10] (0.22,0.2) rectangle (1.43,-\om+\t);
    \draw[amber, dashed, fill=amber!10] (2.78,0.2) rectangle (1.57,-\om+\t);
    \draw[green, dashed, fill=green!10] (0.9,\t+1.1) rectangle (2+\eps,2*\t+0.2);

    \node[green,right] at (2+\eps,3*\t/2+0.65){\(\scriptstyle t-1\)};
    \node[amber,right] at (2.78,-\om/2+\t/2+0.1){\(\scriptstyle \omega-t\)};
    \node[blue,left] at (0.22,-\om/2+\t/2+0.1){\(\scriptstyle \omega-t\)};
    \node[red,left] at (1.9,\t/2+0.45){\(\scriptstyle t\)};
    \node[red,left] at (-0.13,\t/2+0.45){\(\scriptstyle t\)};
    
    \foreach \i in {1,...,\t}{
        \pgfmathsetmacro{\L}{(\eps/\om)*\i}
        \pgfmathsetmacro{\R}{\L+1}
        \draw[thick] (\L,\i) -- (\R,\i);
        \draw (\L,\i-0.05) -- (\L,\i+0.05);
        \node[left] at (\L,\i) {\(\scriptstyle v_{\i}^0\)};
    }
    \pgfmathsetmacro{\tpone}{\t+1}
    \foreach \i in {\tpone,...,\om}{
        \pgfmathsetmacro{\y}{\om-\i+0.5}
        \pgfmathsetmacro{\L}{2/5}
        \pgfmathsetmacro{\R}{\L+1}
        \pgfmathtruncatemacro{\ii}{\i}
        \draw[thick] (\L,-\y) -- (\R,-\y);
        \node[left] at (\L,-\y) {\(\scriptstyle v_{\ii}^0\)};
    }

    \foreach \i in {1,...,\t}{
        \pgfmathsetmacro{\L}{2+(\eps/\om)*\i}
        \pgfmathsetmacro{\R}{\L+1}
        \draw[thick] (\L,\i) -- (\R,\i);
        \node[right] at (\R,\i) {\(\scriptstyle v_{\i}^2\)};
    }

    \foreach \i in {\tpone,...,\om}{
        \pgfmathsetmacro{\y}{\om-\i+0.5}
        \pgfmathsetmacro{\L}{8/5}
        \pgfmathsetmacro{\R}{\L+1}
        \pgfmathtruncatemacro{\ii}{\i}
        \draw[thick] (\L,-\y) -- (\R,-\y);
        \node[right] at (\R,-\y) {\(\scriptstyle v_{\ii}^2\)};
    }

    \foreach \i in {1,...,\the\numexpr\t-1\relax}{
        \pgfmathsetmacro{\L}{(\eps/(\om)*(\i+1/2) + 1}
        \pgfmathsetmacro{\R}{\L+1}
        \pgfmathsetmacro{\y}{\t+\i+0.5}
        \draw[thick] (\L,\y) -- (\R,\y);
        \draw[dotted] (\L,\y) -- (\L,-\om+\t-0.65);
        \draw[dotted] (\R,\y) -- (\R,-\om+\t-0.65);
        \node[left] at (\L,\y)
        {\(\scriptstyle v_{\i}^1\)};
    }
    
    \draw[dashed] (-0.25,-\om+\t-0.4) rectangle (1.47,\t+0.8);
    \draw[dashed] (3.14+\eps,-\om+\t-0.4) rectangle (1.53,\t+0.8);
    \node[right] at (-0.25,-\om+\t+0.1){$\scriptstyle K_\omega^0$};
    \node[left] at (3.14+\eps,-\om+\t+0.1){$\scriptstyle K_\omega^2$};
    
    \pgfmathsetmacro{\barl}{-\om+\t-0.95}
    \foreach \i in {0,1,2,3}{
        \draw (\i,\barl+0.1) -- (\i,\barl-0.1) node[below]{\(\scriptstyle\i\)};
    }
    \draw (3+\eps,\barl+0.1) -- (3+\eps,\barl-0.1) node[below]{\(\scriptstyle3+\varepsilon\)};
    \draw (-0.25,\barl) -- (3.14+\eps,\barl);
    \end{tikzpicture}
    \caption{A unit-interval representation for the $\langle\omega,2,-1,3+\varepsilon\rangle$-strategy in Theorem \ref{mic}.}
    \label{figmic}
    \end{figure}
    
    Let $\omega$ be a positive integer, let $\varepsilon'>0$ and let $\varepsilon=\min\{\varepsilon',0.1\}$. First, we present two cliques $K_\omega^0$ and $K_\omega^2$ of size $\omega$. Let $t$ denote the number of colors that $K_\omega^0$ and $K_\omega^2$ have in common. If $t\leq 1$, then we are done. Hence, we assume that $t\geq 2$. Let $V(K_\omega^j)=\{v^j_1,\ldots,v^j_t,v^j_{t+1},\ldots,v^j_\omega\}$ such that $v^0_i$ and $v^2_i$ were assigned the same colors for $i\in\{1,\ldots,t\}$. 
    
    We define a unit-interval representation $I$ for this graph by
    \[
    I(v_i^j)=
    \begin{cases}
    \left[j+\dfrac{\varepsilon}{\omega}i,\; j+\dfrac{\varepsilon}{\omega}i+1\right], & \text{if } i\le t,\\[6pt]
    \left[0.4,1.4\right], & \text{if } i> t\text{ and } j=0,\\[6pt]
    \left[1.6,2.6\right], & \text{if } i> t\text{ and } j=2.\\
    \end{cases}
    \]
    and force the remaining $t-1$ colors by presenting unit intervals 
    \[I(v_i^1)=\left[\frac{\varepsilon}{\omega}(i+0.5)+1,\;\frac{\varepsilon}{\omega}(i+0.5)+2\right]\text{ for }i\in\{1,\ldots,t-1\}.\] 
    The interval representation $I$ is illustrated in Figure \ref{figmic}. Since $\varepsilon\leq \varepsilon'$, this concludes the proof.
\end{proof}





\section{Online coloring of 2-count interval graphs}\label{sec:two}

In this section, we study the online coloring of 2-count interval graphs. First, we show that the competitive ratio of $\ff$ is bounded above by 4 in this setting. In general, we show that no online algorithm has competitive ratio less than 2.5 if the interval representation is unknown and 2 if the interval representation is known. In this setting, the interval length may depend on the clique number of the graph. Unlike for $\sigma$-interval graphs, 2-count interval graphs form a proper subclass of interval graphs without needing to bound the length.

\subsection{Upper bounds}

Given a graph $G$ and an ordering of its vertices $v_1,\ldots,v_n$, the algorithm $\ff$ assigns each vertex $v_i$ the smallest color in $\mathbb{N}$ not assigned to its neighbors in $\{v_1,\ldots,v_{i-1}\}$. The competitive ratio of $\ff$ on interval graphs is at most $8$ according to Naraswamy and Babu \cite{nar-sub-08}. We show that the upper bound can be improved to $4$ when restricted to 2-count interval graphs.

\begin{theorem}\label{thm:two-ff}
    $\ff$ is 4-competitive on 2-count interval graphs.
\end{theorem}

\begin{proof}
    We show that $\ff$ uses at most $4\omega -3$ colors on any sequence of intervals $x_1,\ldots,x_n$ where $\omega$ denotes the maximum clique size of the interval graph represented by $\{x_1,\ldots,x_n\}$. For each $i$, let $X_i=\{x_1,\ldots,x_i\}$  and let $\alpha_i\in\mathbb{N}$ denote the color assigned to $x_i$ by $\ff$. This means that $\alpha_i\leq|N(x_i)\cap X_{i-1}|+1$ where $N(x_i)$ denotes the neighborhood of $x_i$, equivalently the set of intervals intersecting $x_i$. Fix $i\leq n$. If $x_i$ is a shortest interval in $X_i$, then $x_i$ intersects at most $2\omega -2$ intervals in $X_{i-1}$. Therefore, $\alpha_i\leq 2\omega -1$. If $x_i$ is longer than some interval in $X_{i-1}$, then let $\gamma$ be the largest color assigned to an interval intersecting $x_i$. Let $j<i$ and let $x_j$ be any interval intersecting $x_i$ with $\alpha_j=\gamma$. If $x_j$ is shorter than $x_i$, then $\gamma\leq 2\omega -1$. If $x_j$ is the same length as $x_i$, then $x_j$ contains at least one endpoint of $x_i$. Therefore, there are at most $2\omega-2$ intervals of the same length as $x_i$ intersecting $x_i$ and $\alpha_i\leq 2\omega-1+|L|$ where $L$ is the set of intervals in $X_{i-1}$ with the same length as $x_i$ intersecting $x_i$. Thus, $\alpha_i\leq 4\omega-3$.
\end{proof}

On the other hand, the Kierstead-Trotter algorithm in \cite{kie-tro-81} still has competitive ratio $3$ in this setting.

\subsection{Unknown interval representation}
We show that for every positive integer $k$, there is a strategy that forces $5k-1$ colors on a 2-count interval graph of clique number at most $2k$.
\begin{theorem}
    There is no online algorithm that uses less than $\lfloor(5\omega-1)/2\rfloor$ colors on any 2-count interval graph of clique number at most $\omega$.
    \label{thm:two-len}
\end{theorem}

\begin{proof}
Let $k$ be a positive integer. We present a strategy that forces $5k-1$ colors on a 2-count interval graph of clique number at most $2k$. By Theorem \ref{mic}, there is a $\langle k,2,-1,3.1\rangle$-strategy $\B_k$ which constructs a unit-interval graph. Let $m=\binom{6k}{2k}+2$ and $n=m\binom{10k}{4k}+1$. The strategy consists of two phases.

In Phase I, we play the strategy $\B_k$ $n$ 
times resulting in $n$ disjoint subgraphs $M_1,\ldots,M_{n}$ so that for each $i\in\{1,\ldots,n\}$, 
$M_i$ has a unit-interval representation $I_i$ such that the image of $M_i$ under $I_i$ can be contained in the interval $[5i+20,5i+24]$ and $M_i$ is assigned $2k-1$ colors. 
By the pigeonhole principle, there are $m$ graphs $M_{i_1},\ldots,M_{i_{m}}$ which were assigned the same $2k-1$ colors by the algorithm.

In Phase II, we present $m-2$ cliques $K_k^3,K_k^4,\ldots,K_k^{m}$ of size $k$ so that if $v\in K_k^j$, then $vu$ is an edge if and only if $u\in M_{i_j}$. By the pigeonhole principle, there are two cliques $K_k^a$ and $K_k^b$ that were assigned the same $k$ colors by the algorithm. We force the last $2k$ colors by presenting two cliques $K_k^1$ and $K_k^2$ of size $k$ so that if $v\in K_k^1$, then $vu$ is an edge if and only if $u\in M_{i_1}\cup K_k^a$, and if $v\in K_k^2$, then $vu$ is an edge if and only if $u\in M_{i_2}\cup K_k^1\cup K_k^b$. In total, we force $2k-1+k+k+k=5k-1$ colors on a graph of clique number at most $2k$.

For $i\in\{i_a,i_1,i_2,i_b\}$, let $I'_i$ be a unit-interval representation of $M_i$ so that the image of $M_i$ under $I'_i$ is contained in the interval $(0,4),(4,8),(8,12)$ and $(12,16)$, respectively. We define a 2-count interval representation $I$ as follows.

\begin{figure}
    \centering
    \begin{tikzpicture}[yscale=0.5,xscale=0.45]

        \def\xA{-1}
        \def\xB{0}
        \def\xC{4}
        \def\xD{5}
        \def\xE{5}
        \def\xF{6}
        \def\xG{10}
        \def\xH{11}
        \def\xI{11}
        \def\xJ{12}
        \def\xK{16}
        \def\xL{17}
        \def\xM{17}
        \def\xN{18}
        \def\xO{22}
        \def\xP{23}
        
        \draw[dotted] (\xA,1.2)--(\xA,5.5) node[above]{$\scriptstyle 0$};
        \draw[dotted] (\xD,1.2)--(\xD,5.5) node[above]{$\scriptstyle 4$};
        \draw[dotted] (\xH,2.4)--(\xH,5.5) node[above]{$\scriptstyle 8$};
        \draw[dotted] (\xL,1.2)--(\xL,5.5) node[above]{$\scriptstyle 12$};
        \draw[dotted] (\xP,1.2)--(\xP,5.5) node[above]{$\scriptstyle 16$};
        
        \fill[yellow!50!] (\xB,0) rectangle (\xC,1.1);
        \node at ({(\xB+\xC)/2},0.5){$\scriptstyle M_{i_a}$};
        
        \fill[yellow!50!] (\xF,0) rectangle (\xG,1.1);
        \node at ({(\xF+\xG)/2},0.5){$\scriptstyle M_{i_1}$};
        
        \fill[yellow!50!] (\xJ,0) rectangle (\xK,1.1);
        \node at ({(\xJ+\xK)/2},0.5){$\scriptstyle M_{i_2}$};
        
        \fill[yellow!50!] (\xN,0) rectangle (\xO,1.1);
        \node at ({(\xN+\xO)/2},0.5){$\scriptstyle M_{i_b}$};
        
        \fill[red!40!] (\xA,1.2) rectangle (\xE,2.3);
        \path[thick,pattern=horizontal lines, pattern color = white] (\xA,1.2) rectangle (\xE,2.3);
        \node at ({(\xA+\xE)/2},1.7){$\scriptstyle K_{k}^{a}$};
        
        \fill[red!40!] (\xL,1.2) rectangle (\xP,2.3);
        \path[thick,pattern=horizontal lines, pattern color = white] (\xL,1.2) rectangle (\xP,2.3);
        \node at ({(\xL+\xP)/2},1.7){$\scriptstyle K_{k}^{b}$};
        
        \fill[blue!40!] (\xD,2.4) rectangle (\xI,3.5);
        \path[thick,pattern=horizontal lines, pattern color = white] (\xD,2.4) rectangle (\xI,3.5);
        \node at ({(\xD+\xI)/2},2.9){$\scriptstyle K_{k}^{1}$};
        
        \fill[green!40!] (\xH,3.6) rectangle (\xM,4.7);
        \path[thick,pattern=horizontal lines, pattern color = white] (\xH,3.6) rectangle (\xM,4.7);
        \node at ({(\xH+\xM)/2},4.1){$\scriptstyle K_{k}^{2}$};
        
        \draw[|-|] ({\xB-0.5},0) -- ({\xB-0.5},1.1);
        \node[left] at ({\xB-0.5},0.5){$\scriptstyle {k}$};
        
        \draw[|-|] ({\xA-0.5},1.2) -- ({\xA-0.5},2.3);
        \node[left] at ({\xA-0.5},1.7){$\scriptstyle {k}$};
        
        \draw[|-|] ({\xD-0.5},2.4) -- ({\xD-0.5},3.5);
        \node[left] at ({\xD-0.5},2.9){$\scriptstyle {k}$};
        
        \draw[|-|] ({\xH-0.5},3.6) -- ({\xH-0.5},4.7);
        \node[left] at ({\xH-0.5},4.1){$\scriptstyle {k}$};
        
        \foreach \L/\R in {\xB/\xC, \xF/\xG, \xJ/\xK, \xN/\xO}{
            \draw[|-|] (\L,-0.5) -- (\R,-0.5);
            \node[below] at ({(\L+\R)/2},-0.5){$\scriptstyle {3+\varepsilon}$};
        }
        
        
        \end{tikzpicture}
    \caption{A 2-count interval representation for the component assigned $5k-1$ colors in Theorem 3.2.}
    \label{fig:bridge2}
    \end{figure}

    \[
    I(v)=
    \begin{cases}
    I'_i(v), & \text{if } i\in\{i_a,i_1,i_2,i_b\} \text{ and }v\in M_i,\\[4pt]
    I_i(v), & \text{if } i\in\{1,\ldots,n\}-\{i_a,i_1,i_2,i_b\} \text{ and }v\in M_i,\\[4pt]
    \left[5i+20,\; 5i+24\right], & \text{if } i\in\{3,\ldots,m\}-\{a,b\} \text{ and }v\in K_k^i,\\[4pt]
    \left[0,4\right], & \text{if } v\in K_k^a,\\[4pt]
    \left[12,16\right], & \text{if } v\in K_k^b,\\[4pt]
    \left[4i,4i+4\right], & \text{if } i\in\{1,2\}\text{ and } v\in K_k^i.
    \end{cases}
    \]
Lastly, all intervals introduced in Phase I have length 1, and all intervals introduced in Phase II have length $4$. Figure \ref{fig:bridge2} illustrates the component that was assigned $5k-1$ colors.
\end{proof}

\subsection{Known interval representation}
In \cite{cgjmp-24}, Chybowska-Sok\'ol et al. present a strategy that forces $5k$ colors on a $(1+\varepsilon)$-interval graph with clique number $3k$ for any positive integer $k$ and $\varepsilon>0$. This strategy requires two interval lengths in $[1,1+\varepsilon]$ but can be carefully modified to use intervals of lengths in $\{1,1+\varepsilon\}$. They also present a strategy that forces $7k$ colors on a $(2+\varepsilon)$-interval graph with clique number $4k$ for any positive integer $k$ and $\varepsilon>0$. This strategy requires two interval lengths in $[1,2+\varepsilon]$. This strategy can be modified to use two interval lengths from an initially declared set of three lengths for any given $\omega>0$. 
Any known strategy that forces more than $\frac{7}{4}\omega+o(\omega)$ colors on an interval graph presented with interval representation requires at least three interval lengths. We present a strategy that forces at least $8k$ colors on a $2$-count interval graph of clique number $4k$.

First, we provide an outline of the proof. Our approach is similar to that of Lemma \ref{lemma:strat} and Theorem \ref{thm:two-len}. In Phase I, we aim to force four disjoint same-colored unit-interval graphs via a repeated sub-strategy and the pigeonhole principle. In Phase II, our goal is to introduce cliques of longer intervals and apply the ``bridging strategy'' illustrated in Figures 3 and 6. However, when the interval representation is known, the coloring algorithm may not provide the necessary positioning of the same-colored cliques. Hence, we first present two cliques covering the two outer same-colored unit-interval graphs from Phase I. We then consider two cases. If most of the colors in the two cliques are the same, then we continue with the same bridging strategy as before, introducing two more cliques. Otherwise, we instead introduce one clique ``bridging'' across the two middle same-colored unit-interval graphs from Phase I. Figure~7 illustrates the two cases within Phase II. It remains to address one final issue. If the graphs in Phase I are not positioned perfectly, we cannot guarantee the maximal number of colors with only two interval lengths. We solve this by spacing out the four graphs so that there are sufficiently large empty gaps between the middle graphs and the outer graphs from Phase I, and the two middle graphs from Phase I are very close to the midpoint of the two outer graphs. 

Let $\EL_\omega$ denote the strategy in \cite{eps-lev-05} that forces $\lfloor1.5\omega\rfloor$ colors on a sequence of intervals whose union can be contained in any interval of length $2+\varepsilon$ for any $\varepsilon>0$ where $\omega$ denotes the clique number of the graph represented. 

Let $s\in\mathbb{R}$, $\lambda>0$, and $\omega\in\mathbb{Z}_+$. We define the strategy $\mathrm{Gap}(\omega,s,\lambda)$ as follows. For each $i\in\mathbb{N}$, play $\EL_\omega$ to construct a unit-interval graph $E_i$ in the interval $(s+\lambda^i, s+\lambda^i+3)$ until no new colors have been used on $E_i$. We denote the set of colors assigned to $E_i$ by $\mathcal{E}_i$, and we refer to the index $t$ at which the strategy terminates as the \emph{terminal index} of the graph $E_1\cup\ldots\cup E_t$ constructed. This means that the strategy terminates at \emph{terminal index} $t$ if $\mathcal{E}_t\subseteq\mathcal{E}_1\cup\ldots\cup\mathcal{E}_{t-1}$. 

Notice that the first $t-1$ graphs are contained in the interval $(s,s+\lambda^{t-1}+3)$, and the last graph is contained in the interval $(s +\lambda^t,s+\lambda^t+3)$. This means that the interval $(s+\lambda^{t-1}+3,s +\lambda^t)$ does not intersect any interval introduced by $\mathrm{Gap}$. We can make this gap as big as necessary by choosing $\lambda$ sufficiently large. Additionally, if there is a bound on the number of colors that the coloring algorithm can use, then by the pigeonhole principle, $\mathrm{Gap}$ is guaranteed to terminate.

\begin{figure}
        \centering
        \begin{tikzpicture}[x=0.46cm,y=0.5cm]
            \draw[red!40, fill=red!40] (0,0) rectangle (0.6,0.5);
            \draw[red!40, fill=red!40] (12,0) rectangle (12.2,0.5);
            \draw[red!40, fill=red!40] (12.4,0) rectangle (12.6,0.5);
            \draw[red!40, fill=red!40] (24,0) rectangle (24.1,0.5);
            \draw[very thick] (-0.2,0) -- (24.2,0);
            \foreach \i in {0,10,14,24}{
                \draw[very thick] (\i,-0.75) -- (\i,0);
                }
            \foreach \i in {0,2,...,24}{
                \draw[thick] (\i,-0.25) -- (\i,0);
                }
            \node[red] (ap) at (2.6,1.5){$E_{p,1},\ldots,E_{p,t_p-1}$};
            \draw[red,dotted] (0,1.2) -- (0,0.55);
            \draw[red,dotted] (4,1.2) -- (0.6,0.55);
            \draw[red,dotted] (3,1.2) -- (0.5,0.55);
            \draw[red,dotted] (2,1.2) -- (0.4,0.55);
            \draw[red,dotted] (1.5,1.2) -- (0.3,0.55);
            \draw[red,dotted] (1,1.2) -- (0.2,0.55);
            \draw[red,dotted] (0.5,1.2) -- (0.1,0.55);
            \node[red] (bp) at (24.6,1.5){$E_{p,t_p}$};
            \draw[red,dotted] (24.5,1.2) -- (24,0.55);
            \node[red] (aq) at (9.4,1.5){$E_{q,1},\ldots,E_{q,t_q-1}$};
            \draw[red,dotted] (11.2,1.2) -- (12.2,0.55);
            \draw[red,dotted] (10.2,1.2) -- (12.2,0.55);
            \draw[red,dotted] (9.2,1.2) -- (12.2,0.55);
            \draw[red,dotted] (8.2,1.2) -- (12.2,0.55);
            \draw[red,dotted] (7.2,1.2) -- (12.2,0.55);
            \node[red] (bq) at (13.3,1.5){$E_{q,t_q}$};
            \draw[red,dotted] (13.1,1.2) -- (12.4,0.55);
            \node (lp) at (0,-1.5){$\scriptstyle s_p$};
            \node (rp) at (24,-1.5){$\scriptstyle s_p+(\Lambda_{p})^{t_p}$};
            \node (lpd) at (10,-1.5){$\scriptstyle s_p+\frac{5}{12}(\Lambda_{p})^{t_p}$};
            \node (rpd) at (14,-1.5){$\scriptstyle s_p+\frac{7}{12}(\Lambda_{p})^{t_p}$};
            \node (mp) at (12,-3){$s_{p+1}$};
            \draw[very thick] (0,2.5) rectangle (10,3.5);
            \node (A) at (5,3){$\mathcal{A}$};
            \draw[very thick] (14.1,2.5) rectangle (24.1,3.5);
            \node (B) at (19.05,3){$\mathcal{L}$};
            \draw[dashed, blue] (-0.5,4) rectangle (24.5,10);
            \node (c1) at (2,5){If $|\mathcal{A}\cap \mathcal{L}|<k$};
            \draw[very thick] (8,4.5) rectangle (18,5.5);
            \node (C) at (13,5){$\mathcal{N}$};
            \draw[dashed, blue] (-0.5,6) -- (24.5,6);
            \node (c1) at (2,9){If $|\mathcal{A}\cap \mathcal{L}|\geq k$};
            \draw[very thick] (2.3,6.5) rectangle (12.3,7.5);
            \node (C2) at (7.3,7){$\mathcal{I}$};
            \draw[very thick] (12.3,7.5) rectangle (22.3,8.5);
            \node (D) at (17.3,8){$\mathcal{F}$};
            \draw[dashed] (12.3,0) -- (12.3,10.7);
            \node (mq) at (12.3,11){$s_{q+1}$};
            \draw[dashed] (12,0) -- (12,-2.7);
            \draw[blue] (-0.7,2.3) rectangle (24.7,10.2);
            \draw[red] (-0.7,-2.2) rectangle (25.5,2.2);
            \node[rotate=90,blue] at (-1.2,6.25) {Phase II};
            \node[rotate=90,red] at (-1.2,0) {Phase I};
            \draw[->, blue] (25.2,2.3) -- (25.2,10.2);
            
        \end{tikzpicture}
        \caption{Final result of the strategy in Theorem \ref{thm:final}}
        \label{fig:final}
    \end{figure}

\begin{theorem}
    There is no online algorithm that uses less than $2\omega-1$ colors on any 2-count interval graph of clique number at most $\omega$ with known interval representation. 
    \label{thm:final}
\end{theorem}

\begin{proof}

    Let $k$ be a fixed positive integer. We present a strategy that forces $8k$ colors on a 2-count interval graph of clique number $4k$ constructed via its interval representation. If the algorithm ever uses $8k$ colors, then we are done, so by the pigeonhole principle, we may assume that the strategy $\mathrm{Gap}$ always terminates. Choose $n$ and $\lambda$ to be sufficiently large [$n=\binom{8k}{3k}+1$ and $\lambda=67$ suffice]. For $i\in\{1,\ldots,n\}$, define $\Lambda_i$ by 
    \(
    \Lambda_i:=1+\lambda+\lambda^2+\ldots+\lambda^{n-i}.
    \)
    The strategy consists of two phases, and Phase I consists of at most $n$ subphases. 
    
    In Phase I, we initialize $s_1=0$, and in Subphase $i$, we play $\mathrm{Gap}(2k,s_i,\Lambda_{i})$. Let $t_i$ be the terminal index of graph 
    \(
    E_{i,1}\cup\ldots\cup E_{i,t_i}
    \)
    constructed in Subphase $i$, let $s_{i+1}=s_i+\frac{1}{2}\Lambda_{t_i}$, and for $j\in\{1,\ldots,t_i\}$, let $\mathcal{E}_{i,j}$ denote the set of colors assigned to $E_{i,j}$. If $\mathcal{E}_{i,t_i}=\mathcal{E}_{j,t_j}$ for some $j<i$, then we end Phase I and continue onto Phase II. Otherwise, we move onto Subphase $i+1$.
    By the pigeonhole principle, we may assume that Phase I ends with terminal indices $t_1,\ldots,t_q$ and $\mathcal{E}_{t_q}=\mathcal{E}_{t_p}$ for some $p$ and some $q$ with $p<q\le n$. 

    In Phase II, we introduce $2k$ copies of the intervals
    \[
    \left[s_p,\;s_p+\frac{5}{12}(\Lambda_{p})^{t_p}\right]
    \text{ and }
    \left[s_p+\frac{7}{12}(\Lambda_{p})^{t_p}+3,\;s_p+(\Lambda_{p})^{t_p}+3\right].
    \]
    Let $\mathcal{A}$ and $\mathcal{L}$ denote the set of colors assigned to the first set of $2k$ intervals and the set of colors assigned to the second set of $2k$ intervals.

    If $|\mathcal{A}\cap \mathcal{L}|<k$, we introduce $2k$ copies of the interval 
    \[
    \left[s_p+\frac{3}{12}(\Lambda_{p})^{t_p},\;s_p+\frac{8}{12}(\Lambda_{p})^{t_p}\right].
    \]
    Let $\mathcal{N}$ denote the set of colors assigned to the set of $2k$ intervals. Then, we force at least 
    \begin{align*}
    |\mathcal{E}_{q,t_q}|+|\mathcal{A}\cup \mathcal{L}|+|\mathcal{N}| &= 3k+|\mathcal{A}|+ |\mathcal{L}|-|\mathcal{A}\cap\mathcal{L}|+|\mathcal{N}| \\
    &\ge 3k+2k+2k-k+2k \\
    &= 8k
    \end{align*}
    colors. 
    
    If $|\mathcal{A}\cap \mathcal{L}|\geq k$, we introduce $2k$ copies of the intervals 
    \[
    \left[s_{q+1}-\frac{5}{12}(\Lambda_{p})^{t_p},\;s_{q+1}\right]
    \text{ and }
    \left[s_{q+1},\;s_{q+1}+\frac{5}{12}(\Lambda_{p})^{t_p}\right].
    \]
    Let $\mathcal{I}$ and $\mathcal{F}$ denote the sets of colors assigned to the first set of $2k$ intervals and second set of $2k$ intervals, respectively. Since $|\mathcal{I}\cap \mathcal{A}|=0$, $|\mathcal{I}\cap \mathcal{L}|< k$. Therefore, we force at least 
    \begin{align*}
    |\mathcal{E}_{q,t_q}|+|\mathcal{I}\cup \mathcal{L}|+|\mathcal{F}| &= 3k+|\mathcal{I}|+ |\mathcal{L}|-|\mathcal{I}\cap\mathcal{L}|+|\mathcal{F}| \\ &\ge 3k+2k+2k-k+2k \\ &= 8k
    \end{align*}
    colors.

    Lastly, all intervals introduced in Phase I have length 1, and all intervals introduced in Phase II have length $\frac{5}{12}(\Lambda_{p})^{t_p}$.
    
\end{proof}

\section*{Acknowledgments}
\thanks{This work was partially supported by the International Collaborative Research Program of Wenzhou-Kean University [ICRPSP2025002].}

\bibliographystyle{acm}
\bibliography{semi}

\end{document}